\documentclass[14pt,a4paper]{article}
\usepackage[utf8]{inputenc}
\usepackage[english]{babel}
\usepackage{amsmath}
\usepackage{amsfonts}
\usepackage{amssymb}
\usepackage{amsthm}
\usepackage{appendix}
\usepackage{cite}
\usepackage{graphicx}
\usepackage{graphicx,overpic}
\usepackage{subfigure}
\usepackage{changes}
\usepackage[left=2cm,right=2cm,top=2cm,bottom=2cm]{geometry}
\newtheorem{te}{Theorem}

\providecommand{\keywords}[1]
{\small	\textbf{\textit{Keywords:}} #1}

\begin{document}

\author{A. M. Escobar-Ruiz, R. Azuaje and J. C. Gordiano\\
Departamento de F\'{i}sica, Universidad Aut\'onoma Metropolitana Unidad Iztapalapa,\\ San Rafael Atlixco 186, 09340, Ciudad de M\'exico, M\'exico}

\title{Two-dimensional classical superintegrable systems: polynomial algebra of integrals}
\maketitle

\begin{abstract}

In this work, we investigate generic classical two-dimensional (2D) superintegrable Hamiltonian systems \( \mathcal{H} \), characterized by the existence of three functionally independent integrals of motion \( (\mathcal{I}_0 = H,\, \mathcal{I}_1,\, \mathcal{I}_2) \). Our main result, formulated and proved as a theorem, establishes that the set \( (\mathcal{I}_0,\, \mathcal{I}_1,\, \mathcal{I}_2,\, \mathcal{I}_{12} = \{\mathcal{I}_1,\, \mathcal{I}_2\}) \) generates a four-dimensional polynomial algebra under the Poisson bracket. Unlike previous studies, this study describes a construction that neither depends on the additive separability of the Hamilton–Jacobi equation nor presupposes polynomial integrals of motion in the canonical momenta. Specifically, we prove an instrumental observation presented in [D. Bonatsos et al., PRA 50, 3700 (1994)] concerning deformed oscillator algebras in  superintegrable systems. We apply the method to a variety of physically relevant examples, including the Kepler system, Holt potential, Smorodinsky–Winternitz potential, Fokas–Lagerstrom potential, the Higgs oscillator, and the non-separable Post–Winternitz system. In several cases, we explicitly derive the form of the classical trajectories \( y = y(x\,;\,\mathcal{I}_0,\mathcal{I}_1,\mathcal{I}_2) \) using purely algebraic means. Moreover, by examining the conditions under which \( \mathcal{I}_1 = \mathcal{I}_2 = 0 \), we identify and characterize special classes of trajectories.

\keywords{Superintegrability, polynomial algebra, 2D Hamiltonian systems, symmetries, periodic orbits}

\end{abstract}

\section{Introduction}
\label{sec1}

In classical mechanics, a Hamiltonian system with \( n \) degrees of freedom is said to be Liouville integrable if it admits \( n \) functionally independent integrals of motion \( \mathcal{I}_i \) that are in mutual involution, i.e., \( \{ \mathcal{I}_i, \mathcal{I}_j \} = 0 \) for all \( i, j = 1, \ldots, n \). A system is termed \textit{superintegrable} if it possesses more than \( n \) independent integrals of motion, and \textit{maximally superintegrable} when the total number reaches \( 2n - 1 \), the theoretical maximum. The presence of these additional conserved quantities imposes strong constraints on the system dynamics and often allows for exact solvability. Classical examples of maximally superintegrable systems include the $n$-dimensional Kepler potential \( V = \alpha/r \) and the radial harmonic oscillator \( V = \omega^2 r^2 \), both of which admit \( 2n - 1 \) explicit constants of motion. For a comprehensive treatment of superintegrable systems in both classical and quantum contexts, we refer the reader to the review article \cite{miller2013classical}.

In two dimensions (\( n = 2 \)), maximally superintegrable systems admit three functionally independent integrals of motion \( (\mathcal{I}_0 = H, \mathcal{I}_1, \mathcal{I}_2) \), with a fourth derived object \( \mathcal{I}_{12} = \{ \mathcal{I}_1, \mathcal{I}_2 \} \) that is generally non-vanishing. For a wide class of such systems, it has been shown that the set \( (\mathcal{I}_0, \mathcal{I}_1, \mathcal{I}_2, \mathcal{I}_{12}) \) forms a closed, four-dimensional polynomial algebra under the Poisson bracket \cite{daskaloyannis2001quadratic,daskaloyannis2006unified,marquette2007polynomial,kress2007equivalence,marquette2010superintegrability,post2011models,miller2013classical}. These studies have primarily focused on second-order superintegrable systems, where all integrals are at most quadratic in the canonical momenta. Notably, Daskaloyannis \cite{daskaloyannis2001quadratic} established the general structure of the corresponding quadratic Poisson algebra. This construction was later formalized in \cite{daskaloyannis2006unified}, where the existence of such an algebra—previously assumed evident—was rigorously demonstrated for all second-order superintegrable systems on two-dimensional manifolds. This polynomial algebra framework has since been used to classify superintegrable systems and explore connections to symmetry algebras \cite{daskaloyannis2007quantum,kalnins2005second,kress2007equivalence}, and has also been extended to systems with higher-order integrals and higher-dimensional settings \cite{marquette2007polynomial,marquette2009superintegrability,marquette2010superintegrability,daskaloyannis2010quadratic,marquette2013quartic}.

To the best of our knowledge, no general proof exists in the literature ensuring the existence of a polynomial algebra of integrals for arbitrary two-dimensional superintegrable systems, particularly when the integrals are not assumed to be polynomial in the momenta or are of order higher than two.

A significant conjecture in this direction was proposed in \cite{bonatsos1994deformed}, suggesting that for any two-dimensional polynomial superintegrable Hamiltonian system, it is always possible to construct two independent integrals \( \mathcal{I}_1 \) and \( \mathcal{I}_2 \) such that their nested Poisson bracket satisfies \( \{ \mathcal{I}_1, \mathcal{I}_{12} \} = -\mathcal{I}_2 \), where \( \mathcal{I}_{12} = \{ \mathcal{I}_1, \mathcal{I}_2 \} \). This structure closely resembles that of a deformed oscillator algebra, and has been employed to analyze quantum superintegrable systems through algebraic methods.

It is worth noting that an open question in the literature concerns the role of polynomial Poisson algebras in the definition of superintegrability. Specifically, it remains unclear whether the existence of a closed polynomial algebra—generated by functions on phase space and closed under the Poisson bracket—is sufficient to ensure that these functions are integrals of motion. In other words, \textit{under what conditions does algebraic closure imply that all generators commute with the Hamiltonian?} Although this inverse problem is not addressed in the present work, it was explored in~\cite{Yurducsen2021DoublyEN}, highlighting its potential relevance to the algebraic characterization of superintegrable systems.

In the present work, we provide a rigorous and constructive proof of an equivalent formulation of this conjecture. Our result establishes that a four-dimensional polynomial algebra of integrals of motion exists for any two-dimensional classical superintegrable Hamiltonian system, without assuming that the integrals are polynomial in the momenta or of limited order. This generalization significantly broadens the applicability of polynomial algebra techniques. Furthermore, we present explicit realizations of this algebra for several physically relevant systems, including cases where the integrals are non-polynomial, thereby demonstrating the utility and scope of our method.

Polynomial algebras play a central role not only in classical mechanics but also in the quantum realm, where their structure is closely tied to the concept of superintegrability \cite{miller2013classical}. In quantum systems, the existence of additional integrals of motion that do not commute among themselves leads naturally to non-Abelian symmetry algebras, which in many notable cases take the form of finite-order polynomial algebras---quadratic, cubic, or higher \cite{daskaloyannis2001quadratic, BONATSOS95, polynomial_latini_2022} (and references therein). These algebras generalize Lie algebras and encode essential information about the underlying dynamics and symmetries of the system. Their presence allows for a powerful algebraic approach to spectral analysis: the energy levels and degeneracies of the system can often be derived without solving the Schrödinger equation directly, by constructing finite-dimensional representations of the algebra and imposing appropriate quantization conditions \cite{marquette2010superintegrability}. This is typically achieved through a realization in terms of a \emph{deformed oscillator algebra} \cite{BONATSOS95, boson_ruan_1999,  algebraic_tanoudis_2011}, where the commutation relations are expressed using creation and annihilation operators satisfying
\[
[a, a^\dagger] = \Phi(N + 1) - \Phi(N),
\]
with \( N \) the number operator. The function \( \Phi(N) \), known as the \emph{structure function}, encodes the specific features of the polynomial algebra, including the Casimir operator and model parameters, and determines the admissible spectrum by enforcing positivity and truncation conditions on the representation space, namely
\[
\Phi(0) = 0, \quad \Phi(n + 1) = 0, \quad \Phi(k) > 0 \quad \text{for } 1 \leq k \leq n.
\]
As such, polynomial algebras provide not only a unifying language for classifying and analyzing quantum superintegrable systems but also offer valuable insights that motivate analogous investigations in the classical setting. An accessible and up-to-date discussion of polynomial algebras in the TTW system appears in \cite{vieyra2025} (see also \cite{10.1063/5.0201981, Turbiner_2023}).

\subsection{Generalities}

As a first step, let us recall the basics on the formulation of Hamiltonian Mechanics (for more details see References \cite{AKN2006, BBT2003, torres2018introduction}).
 
Let $M$ be a smooth manifold. A symplectic structure on $M$ is a closed non-degenerate 2-form $\omega$ on $M$. Closed means that $d\omega=0$ and non-degenerate implies that for each 1-form $\alpha$ on $M$ there is one and only one vector field $X$ which obeys $X \lrcorner \omega=\alpha$. A symplectic manifold is a pair $(M,\omega)$ where $\omega$ is a symplectic structure on $M$. By definition, each symplectic manifold is of even dimension.

Let $(M,\omega)$ be a symplectic manifold of dimension $2n$. Around any point $p\in M$ there exist local coordinates $(q^{1},\cdots,q^{n},p_{1},\cdots,p_{n})$, called canonical coordinates or Darboux coordinates, such that
\begin{equation}
\omega\ = \ dq^{i}\wedge dp_{i}\ .
\end{equation}
In this paper, we adopt the Einstein summation (\textit{i.e.}, a summation over repeated indices is assumed). In Classical Mechanics, the (contravariant) variables $\{ q^{n} \}$ are called generalized coordinates whilst the (covariant) quantities $\{ p_{n} \}$ correspond to their canonical momenta, respectively.   

For each function $f(p,q)\in C^{\infty}(M)$ in the phase space, is assigned a vector field $X_{f}$ on $M$, called the Hamiltonian vector field for $f$, according to
\begin{equation}
X_{f}\lrcorner \omega \ = \ df \ .
\end{equation}
In canonical coordinates $(q,p)$, this vector field $X_{f}$ takes the form
\begin{equation}
X_{f}=\frac{\partial f}{\partial p_{i}}\frac{\partial}{\partial q^{i}}-\frac{\partial f}{\partial q^{i}}\frac{\partial}{\partial p_{i}}\ .
\end{equation}
The assignment $f\longmapsto X_{f}$ is linear, namely
\begin{equation}
X_{f+\alpha \,g}\ = \ X_{f}+\alpha\, X_{g}\ ,
\end{equation}
$\forall\, f,g\in C^{\infty}(M)$ and $\forall\, \alpha \in\mathbb{R}$. For given functions $f,g \in C^{\infty}(M)$, the Poisson bracket of $f$ and $g$ is defined by
\begin{equation}
\label{PB}
\lbrace f,g\rbrace\ = \ X_{g}\,f \ = \ \omega(X_{f},X_{g})\ .
\end{equation}
In canonical coordinates, we have
\begin{equation}
\label{CPB}
\lbrace f,g\rbrace \ = \ \frac{\partial f}{\partial q^{i}}\frac{\partial g}{\partial p_{i}}\,-\,\frac{\partial f}{\partial p_{i}}\frac{\partial g}{\partial q^{i}} \ .
\end{equation}

The geometric theory of conservative Hamiltonian systems, when the Hamiltonian $H$ is not an explicit function of time $ t $, is naturally constructed within the mathematical formalism of symplectic geometry. Given $H\in C^{\infty}(M)$ the dynamics of the Hamiltonian system on $(M,\omega)$ (the phase space) is governed by the Hamiltonian vector field $X_{H}$. In this case, we say that $(M,\omega,H)$ is a Hamiltonian system with $n$ degrees of freedom, and the trajectories of the system $\psi(t)=(q^{1}(t),\cdots,q^{n}(t),p_{1}(t),\cdots,p_{n}(t))$ are the integral curves of $X_{H}$. They satisfy the Hamilton's equations of motion
\begin{equation}
\dot{q^{i}} =\frac{\partial H}{\partial p_{i}}, \hspace{1cm}
\dot{p_{i}} =-\frac{\partial H}{\partial q^{i}}\qquad ;\qquad i=1,2,3,\ldots,n \ .
\end{equation}

The evolution (the temporal evolution) of a function $f\in C^{\infty}(M)$ (a physical observable) along the trajectories of the system is given by
\begin{equation}
\dot{f}= \mathcal{L}_{X_{H}}f=X_{H}f=\lbrace f,H\rbrace \ ,
\end{equation}
where $\mathcal{L}_{X_{H}}f$ is the Lie derivative of $f$ with respect to $X_{H}$. A function $f$ is a global constant of motion of the system (or an integral of motion) if it is constant along all the trajectories of the Hamiltonian system, that is, $f$ is a constant of motion if $\mathcal{L}_{X_{H}}f=0$ (or equivalently $\lbrace f,H\rbrace=0$). Clearly, a global integral is conserved for any set of initial conditions.

The existence of such a global integrals of motion can be used to systematically reduce the number of degrees of freedom. Thus, the task of solving the corresponding Hamilton's equation of motion becomes simpler. For instance, the aforementioned symmetry reduction is exploited in the alternative Hamilton-Jacobi theory which is based on selecting the global integrals of motion as the new momenta. Eventually, the Hamiltonian is transformed to a suitable form such that the Hamilton's equations become directly integrable.

\section{Polynomial algebra of constants of motion}

Let $(M,\omega,H)$ be a classical superintegrable autonomous Hamiltonian system with two degrees of freedom ($n=2$). By definition, there exist two additional functions $F,G\in C^{\infty}(M)$ such that $H,F,G$ are three functionally independent conserved quantities under time evolution. Equivalently, they Poisson commute with $H$, $\lbrace  F,\,H\rbrace = \lbrace  G,\,H\rbrace =0 $ but not between themselves $R\equiv \lbrace  F,\,G\rbrace \neq 0$. 
Any other constant of motion of the system is necessarily a function of $H,F,G$. Consequently, $(H, F,G, R)$ generates a Poisson algebra, which we refer to as the algebra of integrals of motion.

\subsection{Main result of the study}

In \cite{bonatsos1994deformed}, it was conjectured that we can always construct two integrals of motion 
\begin{equation}
{\cal I}_1 \ = \ {\cal I}_1(H,F,G)\quad \text{and} \quad {\cal I}_2\ = \ {\cal I}_2(H,F,G)    \ ,
\end{equation}
such that 
\begin{equation}
\lbrace  {\cal I}_1,\,{\cal I}_{12}\rbrace \ = \ -{\cal I}_{2} \ ,
\end{equation}
where $\lbrace {\cal I}_1,\,{\cal I}_2\rbrace \, \equiv \, {\cal I}_{12} \,$. Below, we present a rigorous proof of this statement. This property is, in fact, the key element that ensures the resulting algebra of integrals is both closed and polynomial in nature. Hereafter, for a comparative purpose, it is convenient to adopt the notation used in \cite{bonatsos1994deformed}, namely 

\begin{equation}
{\cal I}_1 = L \quad , \quad {\cal I}_2 = A \quad , \quad  {\cal I}_{12} = B \ . 
\end{equation}

\begin{te}
\label{tetwoconstants}
If $(M,\omega,H)$ is a classical superintegrable Hamiltonian system with two degrees of freedom, one can always construct two constants of motion $L$ and $A$ such that 
\begin{equation}
\label{eqbonatsos}
\qquad \lbrace L,B\rbrace \ = \ -k^{2}A \ , \qquad k\in \mathbb{R} \ ,
\end{equation}
here $\lbrace L,A\rbrace \equiv B $.
\end{te}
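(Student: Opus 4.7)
The plan is to reduce the question to a two-dimensional symplectic problem on the space of integrals, where the relation $\{L,B\}=-k^{2}A$ becomes the familiar harmonic-oscillator identity $\partial_{\tau}^{2}A=-k^{2}A$ along the flow of $L$.

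First, I would observe that the subalgebra of $C^{\infty}(M)$ consisting of functions of the triple $(H,F,G)$ is closed under the Poisson bracket. Indeed, from $\{H,F\}=\{H,G\}=0$ and the Jacobi identity one gets $\{H,R\}=0$, so on the open set where $H,F,G$ are functionally independent, $R=\{F,G\}$ is itself a smooth function of $(H,F,G)$. The derivation property then yields
\[
\{\phi(H,F,G),\,\psi(H,F,G)\}\;=\;R\,\bigl(\partial_{F}\phi\,\partial_{G}\psi-\partial_{G}\phi\,\partial_{F}\psi\bigr),
\]
so any $L=L(H,F,G)$ and $A=A(H,F,G)$ automatically Poisson-commute with $H$ and are therefore integrals of motion. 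The task reduces to working on the three-dimensional space of values $(h,f,g)$ endowed with this bracket, in which $H$ is a Casimir and the symplectic leaves are the surfaces $H=\mathrm{const}$.

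Second, on each leaf I would take $L=F$ and construct by quadrature an angle $\Theta(h,f,g)$ conjugate to $L$. Since $R\neq 0$ on the open set of interest, $\{L,\Theta\}=1$ reduces to $R\,\partial_{G}\Theta=1$, solved by
\[
\Theta(h,f,g)\;=\;\int^{g}\frac{dg'}{R(h,f,g')}.
\]
I would then set $A(h,f,g)=\sin\bigl(k\,\Theta(h,f,g)\bigr)$ for any chosen $k\in\mathbb{R}$. A direct calculation gives
\[
B\;=\;\{L,A\}\;=\;R\,\partial_{G}A\;=\;k\cos(k\Theta),\qquad \{L,B\}\;=\;R\,\partial_{G}B\;=\;-k^{2}\sin(k\Theta)\;=\;-k^{2}A,
\]
which is exactly the desired identity. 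The canonical pairing $\{L,\Theta\}=1$ guarantees functional independence of $L$ and $A$ on each leaf, so $B\not\equiv 0$. The appearance of $\sin$ simply reflects that trigonometric functions are eigenfunctions of the second-order derivation $\{L,\{L,\cdot\}\}$, which in the Darboux-like coordinates $(L,\Theta)$ acts as $\partial_{\Theta}^{2}$.

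The main technical obstacle is passing from this local, leafwise construction to a globally defined pair $L,A\in C^{\infty}(M)$. The angle $\Theta$ is defined by quadrature and is in general multivalued, and the construction degenerates on the locus $\{R=0\}$; thus one obtains $L$ and $A$ only on an open subset of $M$. For the statement of the theorem this suffices, and the freedom in the choice of $k$ corresponds precisely to rescaling the argument of $\sin$, so no compatibility condition is imposed. In the explicit systems treated later in the paper, $\Theta$ can be written in closed form and these global issues do not arise.
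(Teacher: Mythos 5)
Your proposal is correct, and it reaches the conclusion by a genuinely different route than the paper. The paper stays on the full four-dimensional phase space: it invokes Hamilton--Jacobi theory to produce canonical coordinates $(Q_1,Q_2,P_1=H,P_2=L)$, so that $X_L=\partial/\partial Q_2$, and then solves $\partial^2 A/\partial Q_2^2=-k^2A$ by $A=\sin(kQ_2)F_1(H,L)+\cos(kQ_2)F_2(H,L)$. You instead push the problem down to the three-dimensional Poisson algebra generated by $(H,F,G)$, where $H$ is a Casimir and the bracket is $R\,(\partial_F\phi\,\partial_G\psi-\partial_G\phi\,\partial_F\psi)$, and you build the variable conjugate to $L=F$ by the explicit quadrature $\Theta=\int^g dg'/R$. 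The two constructions are conceptually parallel --- in both cases $A$ is a sine of an angle conjugate to $L$, i.e.\ an eigenfunction of $X_LX_L$ --- but yours has two concrete advantages: it makes manifest that $A$ is a function of the integrals $(H,F,G)$ alone, which is exactly the form $\mathcal{I}_2=\mathcal{I}_2(H,F,G)$ demanded by the conjecture of Bonatsos et al., and it shows precisely where superintegrability (as opposed to mere Liouville integrability) enters, namely through $R=\{F,G\}\neq 0$, which is needed to invert $R\,\partial_G\Theta=1$. The paper's use of superintegrability is comparatively vague (the coordinate $Q_2$ ``may be global''). The price you pay is the same one the paper pays: the construction is local, valid away from $\{R=0\}$ and up to multivaluedness of the quadrature, which you state honestly. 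One small point worth adding for completeness: to match the generality of the paper's solution you could replace $A=\sin(k\Theta)$ by $A=\sin(k\Theta)F_1(H,F)+\cos(k\Theta)F_2(H,F)$, which is the general solution of the second-order equation along the leaf and is what the paper later uses to arrange polynomial closure of the algebra; and you should note that $k\neq 0$ is needed for $A$ to be a nontrivial integral functionally independent of $L$.
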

\begin{proof}
First we rewrite (\ref{eqbonatsos}) as follows
\begin{equation}
\lbrace L,\lbrace L,A\rbrace\rbrace \ = \ -k^{2}\,A \ ,
\end{equation}
which is equivalent to 
\begin{equation}
X_{L}X_{L}A \ = \ -k^{2}A \ ,
\end{equation}
where $X_{L}$ is the Hamiltonian vector field for $L$. Therefore, finding the desired integrals of motion $L$ and $A$ reduces to the equations 
\begin{equation}
\label{eqbonatsos1}
X_{H}\,A \ = \ 0 \qquad\text{and}\qquad X_{L}X_{L}A \ = \ -k^{2}\,A \ .
\end{equation}

Now, since the system is superintegrable it is also Liouville integrable. This implies that there exists a constant of motion $L$ such that $H$ and $L$ are functionally independent. By Hamilton-Jacobi theory, it is always possible to find a canonical transformation of the form $(q_{1},q_{2},p_{1},p_{2})\mapsto(Q_{1},Q_{2},P_{1},P_{2})$ for which the new momentum coordinates are given by $P_{1}=H$ and $P_{2}=L$ \cite{AKN2006,torres2018introduction,BBT2003,GPS2002}. Thus, $Q_{2}$ is a locally defined integral of motion\footnote{ $Q_{2},H,L$ are functionally independent quantities on the domain where $Q_{2}$ is well defined.} that, due to the superintegrability, may be global. The local expressions for the Hamiltonian vector fields $X_{H}$ and $X_{L}$ in the canonical coordinates $(Q_{1},Q_{2},P_{1},P_{2})$ read
\begin{equation}
X_{H}=\frac{\partial}{\partial Q_{1}} \qquad\text{and}\qquad X_{L}=\frac{\partial}{\partial Q_{2}} \ .
\end{equation}
So (\ref{eqbonatsos1}) takes the local form
\begin{equation}
\label{eqbonatsoslocal}
\frac{\partial}{\partial Q_{1}}A\ = \ 0 \qquad\text{and}\qquad \frac{\partial^{2}}{\partial Q_{2}^{2}}\,A\ = \ -k^{2}\,A \ ,
\end{equation}
which is easily solved by choosing
\begin{equation}
\label{eqgeneralform}
A \ = \ A(Q_{2},H,L) \ = \ \sin(k\,Q_{2})\,F_{1}(H,L) \ + \ \cos(kQ_{2})\,F_{2}(H,L)\ ,
\end{equation}
where $F_{1}$ and $F_{2}$ are arbitrary functions depending only on $H$ and $L$.
\end{proof}

Next, let us suppose that $L$ and $A$ are two constants of motion satisfying equation (\ref{eqbonatsos}). Since $A=A(Q_2,\,P_{1}=H,\,P_{2}=L)$, we obtain
\begin{equation}
\frac{\partial A}{\partial Q_{1}}=\lbrace A,H\rbrace\ = \ 0 \ ,
\end{equation}
so $A=A(Q_{2},H,L)$, and
\begin{equation}
B\ = \ \lbrace L,A\rbrace=-\frac{\partial A}{\partial Q_{2}} \ . 
\end{equation}
Therefore, we can write 
\begin{equation}
B^2\ + \ k^{2}A^2\ = \ G(Q_{2},H,L)\ ,
\end{equation}
with $G$ a function depending, in general, on $Q_{2},H,L$. However, the straightforward computation   
\begin{equation}
\begin{split}
\frac{\partial G}{\partial Q_{2}} & \ = \  2\,B\,\frac{\partial B}{\partial Q_{2}}\ + \ 2\,k^{2}\,A\,\frac{\partial A}{\partial Q_{2}}\\
& \ = \  -2\,\frac{\partial A}{\partial Q_{2}}\frac{\partial B}{\partial Q_{2}}\ + \ 2\,\lbrace B,L \rbrace\,\frac{\partial A}{\partial Q_{2}}\\
&\ = \  -2\,\frac{\partial A}{\partial Q_{2}}\frac{\partial B}{\partial Q_{2}}\ + \ 2\,\frac{\partial B}{\partial Q_{2}}\frac{\partial A}{\partial Q_{2}}\\
&\ = \ 0 \ ,
\end{split}
\end{equation}
shows that
\begin{equation}
\label{BAred}
B^2\ + \ k^{2}\,A^2\ = \ G(H,L)\ .
\end{equation}
Thus, $G(H,L)$ is always a positive function and it vanishes only when both $A$ and $B$ are zero as well. 
From (\ref{BAred}), we obtain
\begin{equation}
\lbrace A, B^{2}\rbrace \ + \ k^{2}\,\lbrace A, \,A^{2}\rbrace\ = \ \lbrace A,\,G(H,L)\rbrace \ ,
\end{equation}
hence
\begin{equation}
2\,B\,\lbrace A, \,B\rbrace \ = \ \frac{\partial G}{\partial H}\lbrace A, H\rbrace \ + \ \frac{\partial G}{\partial L}\,\lbrace A, L\rbrace \ ,
\end{equation}
and
\begin{equation}
2B\lbrace A, B\rbrace\ = \ -\frac{\partial G}{\partial L}B \ .
\end{equation}
Eventually, we arrive to the expression
\begin{equation}
\label{ABG}
\lbrace A, B\rbrace \ = \ -\frac{1}{2}\frac{\partial G}{\partial L}\ ,
\end{equation}
in complete agreement with \cite{bonatsos1994deformed}.

\clearpage

\textbf{Remarks}

\begin{itemize}

\item Consider the case where the Hamiltonian \( H \) and the constants of motion \( L \) and \( A \), satisfying equation~\eqref{eqbonatsos}, are all polynomial functions of the momentum coordinates \( (p_1, p_2) \). From equations~\eqref{BAred} and~\eqref{ABG}, it follows that both \( G(H, L) \) and its partial derivative \( \partial G / \partial L \) must also be polynomial functions in the momenta. Consequently, since \( G \) depends only on \( H \) and \( L \), and both are polynomial in the momenta, we conclude that \( G(H, L) \) must be a polynomial function in \( H \) and \( L \). To clarify, equation~\eqref{BAred} shows that \( G \) is polynomial in the momentum coordinates. This implies that it can be expressed as a function of \( H \) and \( L \), potentially with rational exponents. However, because both \( H \) and \( L \) are themselves polynomial in the momenta, and \( G \) is a polynomial when written in terms of those momenta, it follows that any rational exponents must in fact be integers. Similarly, equation~\eqref{ABG} implies that \( \partial G / \partial L \) is also polynomial in the momenta. This consistency ensures that the exponents of \( H \) and \( L \) appearing in \( G(H, L) \) must all be integers, thereby confirming that \( G \) is a true polynomial in \( H \) and \( L \).

\item Since \( G(H, L) \) is a two-variable polynomial in \( H \) and \( L \), the set of integrals \( (H, L, A, B) \) forms a \emph{4-generated polynomial algebra} under the Poisson bracket. The nonvanishing Poisson relations are given by
\[
\{L, A\} = B, \quad \{L, B\} = -k^2 A, \quad \{A, B\} = P(H, L),
\]
where \( P(H, L) \) is a polynomial function of \( H \) and \( L \), explicitly determined via equation~\eqref{ABG}. This algebraic structure provides the classical foundation for the study of corresponding quantum superintegrable systems, where analogous polynomial algebras play a central role in the algebraic derivation of the spectrum.

\item It is worth emphasizing that even in the most general case---where the Hamiltonian and the constants of motion are \emph{not necessarily polynomial} in the momentum coordinates---a polynomial algebra of constants of motion still arises. To the best of our knowledge, this scenario has not been explicitly addressed in the literature. Suppose we consider a superintegrable system with Hamiltonian \( H(q_1, q_2, p_1, p_2) \) and a constant of motion \( L(q_1, q_2, p_1, p_2) \), not assumed to be polynomial in the momenta. The proof of Theorem~\ref{tetwoconstants} guarantees the existence of a constant of motion \( A \) of the form~\eqref{eqgeneralform}, where \( F_1(H, L) \) and \( F_2(H, L) \) are arbitrary smooth functions. Then,
\[
B = \{L, A\} = -\frac{\partial A}{\partial Q_2} = -k \cos(k Q_2) F_1(H, L) + k \sin(k Q_2) F_2(H, L),
\]
and
\[
B^2 + k^2 A^2 = k^2 \left( F_1^2(H, L) + F_2^2(H, L) \right).
\]
Differentiating yields
\[
\{A, B\} = -k^2 \left( F_1(H, L) \frac{\partial F_1}{\partial L} + F_2(H, L) \frac{\partial F_2}{\partial L} \right).
\]
Hence, if \( F_1(H, L) \) and \( F_2(H, L) \) are chosen to be \emph{polynomial functions} of \( H \) and \( L \), the full set \( (H, L, A, B) \) again closes a \emph{polynomial Poisson algebra}. This result underscores the generality and robustness of the polynomial algebraic structure in classical superintegrable systems, extending beyond the traditionally studied polynomial Hamiltonians.

\item Although analogous polynomial algebras are well established in the quantum setting through Casimir operators and deformed oscillator realizations, the classical construction presented here is far from trivial. In contrast to the quantum case, where the algebraic structure is often postulated or abstractly imposed to fit known representation theory, our classical approach derives the polynomial closure explicitly from first principles. The emergence of the function \( G(H, L) \) from the Poisson algebra of integrals of motion is not assumed but obtained constructively, providing a concrete and intrinsic characterization of the algebraic structure. To the best of our knowledge, this direct derivation of a closed classical polynomial algebra---without requiring polynomial expressions in the momenta or separability assumptions---has not been previously formulated in this generality.

\end{itemize}

\section{Examples}

To clarify these ideas, we present in this section explicit examples where the polynomial algebra is constructed through the function \( G(H,L) \) (\ref{BAred}), without performing direct computations of the Poisson brackets. These examples demonstrate how one can derive trajectories in configuration space purely by algebraic means, bypassing the need to solve differential equations.

\subsection{The isotropic harmonic oscillator}

We begin by considering the Hamiltonian of the two-dimensional isotropic harmonic oscillator
\begin{equation}
\label{HI}
H_{I} \ = \  \frac{1}{2\,m}(p_{x}^{2} \,+\, p_{y}^{2}) \ + \ \frac{m\,\omega^{2}\,(x^{2} + y^{2})}{2}\ ,
\end{equation}
where \( (x, y) \) are Cartesian coordinates on the plane, and \( (p_x, p_y) \) denote the corresponding canonical momenta. In this case, two well-known constants of motion are the angular momentum and the Fradkin tensor \cite{fradkin1965three,ballesteros2011superintegrable}
\begin{equation}
L=L_{z} = x\,p_{y} - y\,p_{x}\ , \quad\text{and}\quad A=S_{xy} = p_{x}\,p_{y} \ + \ m^{2}\,\omega^{2}\,x\,y \ ,
\end{equation}
respectively.
Its Poisson bracket read
\begin{equation}
\begin{split}
\{L,A\} &= -p_{x}^{2} + p_{y}^{2} \  - \  \omega^{2}\,m^{2}\,x^{2}\  + \  \omega^{2}\,m^{2}\,y^{2}\\
&= B\ ,\\
&\text{and}\\
\{L,B\} &= -4\,p_{x}\,p_{y}\ - \ 4\,\omega^{2}\,m^{2}\,x\ y \ , \\
&= - 4\,A \ ,
\end{split}
\end{equation}
therefore $k^{2} = 4$, cf. (\ref{eqbonatsos}). On the other hand, for (\ref{BAred}) we have
\begin{equation}
\begin{split}
G(H_I,L) &  = \  B^{2} \ + \  4\,A^{2}, \\
&= p_{x}^{4} + 2\,p_{x}^{2}\,p_{y}^{2} + p_{y}^{4} + 2\,\omega^{2}m^{2}x^{2}(p_{x}^{2} - p_{y}^{2}) + \omega^{4}m^{4}x^{4} + 8\,\omega^{2}m^{2}x\,y\,p_{x}\,p_{y}  \\ 
&\  - 2\,\omega^{2}m^{2}y^{2}(p_{x}^{2} - p_{y}^{2}) + 2\,\omega^{4}m^{4}x^{2}y^{2} + \,\omega^{4}m^{4}y^{4} \ , \\
&= \ (2\,m\,H_I)^{2}\ - \ \omega^{2}\,m^{2}\,(2\,L)^{2} \ ,
\end{split}
\end{equation}
which, in turn, leads to the quadratic polynomial relation between the integrals $(H_I,L,A,B)$
\begin{equation}
     B^{2} \ + \ 4\,A^{2} \ - \ (2\,m\,H_I)^{2}\ + \ \omega^{2}\,m^{2}\,(2\,L)^{2} \ = \ 0 \ .
\end{equation}
Finally, from $\{A,B\} = -\frac{1}{2}\frac{\partial G}{\partial L}$ in (\ref{ABG}), we obtain 
\begin{equation}
\lbrace A,B \rbrace \  = \  4\,\omega^{2}\,m^{2}\,L \ .
\label{ABI}
\end{equation}
Thus, the polynomial algebra generated by the four elements $(H_I,L,A,B)$ is, in fact, linear. 

\subsubsection{Trajectories in the configuration space}

By fixing the values of the three integrals of motion \( (H_I, L, A) \), one obtains a system of three algebraic equations that can be used to eliminate the momentum variables \( p_x \) and \( p_y \). As a result, the trajectories of the system in configuration space can be determined purely through algebraic manipulation, yielding an explicit relation of the form
\[
y\  = \ y(x;\, H_I, L, A)\ ,
\]
where the constants of motion enter as parameters. This remarkable feature is a direct consequence of the system’s maximal superintegrability, which ensures that the dynamics are fully constrained by the integrals of motion without the need to solve differential equations.

Putting $m=1$ and $\omega=1$, the defining algebraic equation of such trajectories takes the form

\begin{equation}
\label{Tr1}
 L^2 \left(\,4 \,A \,x\, y-2\, H_I \left(x^2+y^2\right)+x^4-2\, x^2 y^2+y^4\right)\ + \ \left(A \left(x^2+y^2\right)-2 \,H_I\,  x \,y\,\right)^2+L^4\ = \ 0   \ .
\end{equation}
It is a fourth-order polynomial equation in the variables $x,y$. Here, the values of the integrals are determined by the initial conditions $(x_0,y_0,p_{x0},p_{y0})$.

For the case $A=0$, the integrals $L$ and $B$ are in \textit{particular} involution \cite{Escobar-Ruiz_2024}. In two-dimensional superintegrable systems, the presence of three functionally independent integrals of motion—typically the Hamiltonian \( H \) and two additional conserved quantities \( F \) and \( G \)—ensures that the system is maximally constrained, with all bounded trajectories closed and the dynamics highly regular \cite{nekhoroshev1972action}. In general, the additional integrals \( F \) and \( G \) fail to be in involution, satisfying a nontrivial Poisson bracket relation that reflects the non-Abelian character of the associated symmetry algebra. Nevertheless, there may exist dynamically invariant submanifolds of phase space, or particular trajectories, along which the Poisson bracket \( \{F, G\} \) vanishes identically. This circumstance does not alter the global superintegrability of the system, but it signifies a local abelianization of the algebra of integrals. On these submanifolds, the flows generated by \( F \) and \( G \) commute, and the system admits a local Liouville-integrable structure. Such configurations are of particular interest, as they correspond to algebraically and geometrically distinguished sectors of the phase space where the symmetry structure simplifies. They often underlie the existence of separable coordinate systems, permit the explicit integration of the equations of motion, and provide insight into spectral degeneracies and singular orbit structure.

\begin{figure*}[h]
	\centering
	\subfigure[$L\,A\neq 0$]{\includegraphics[width=0.45\textwidth]{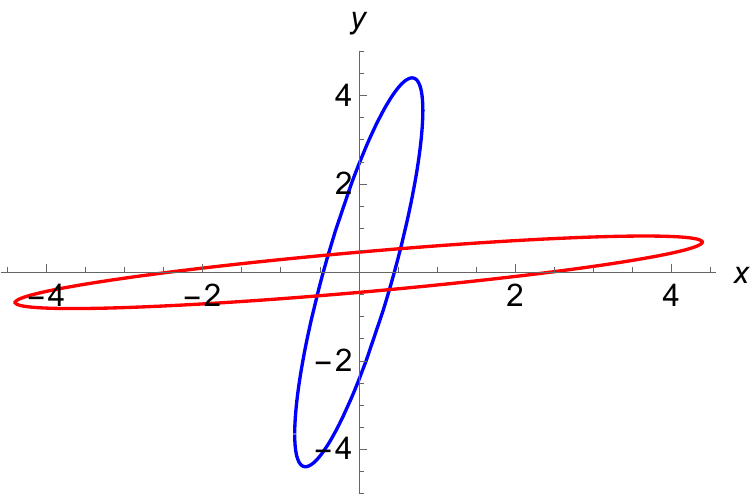}}\hfill
	\subfigure[$L= 0$, $A\neq0$]{\includegraphics[width=0.45
\textwidth]{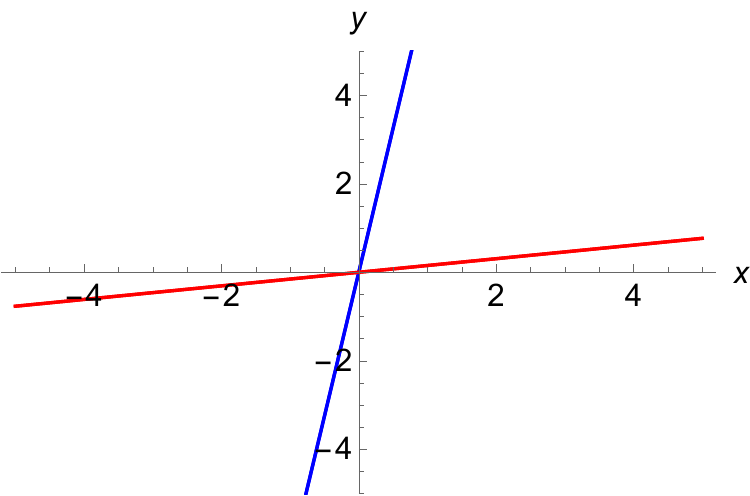}}\\
\subfigure[$A= 0$, $L\neq0$]{\includegraphics[width=0.45
\textwidth]{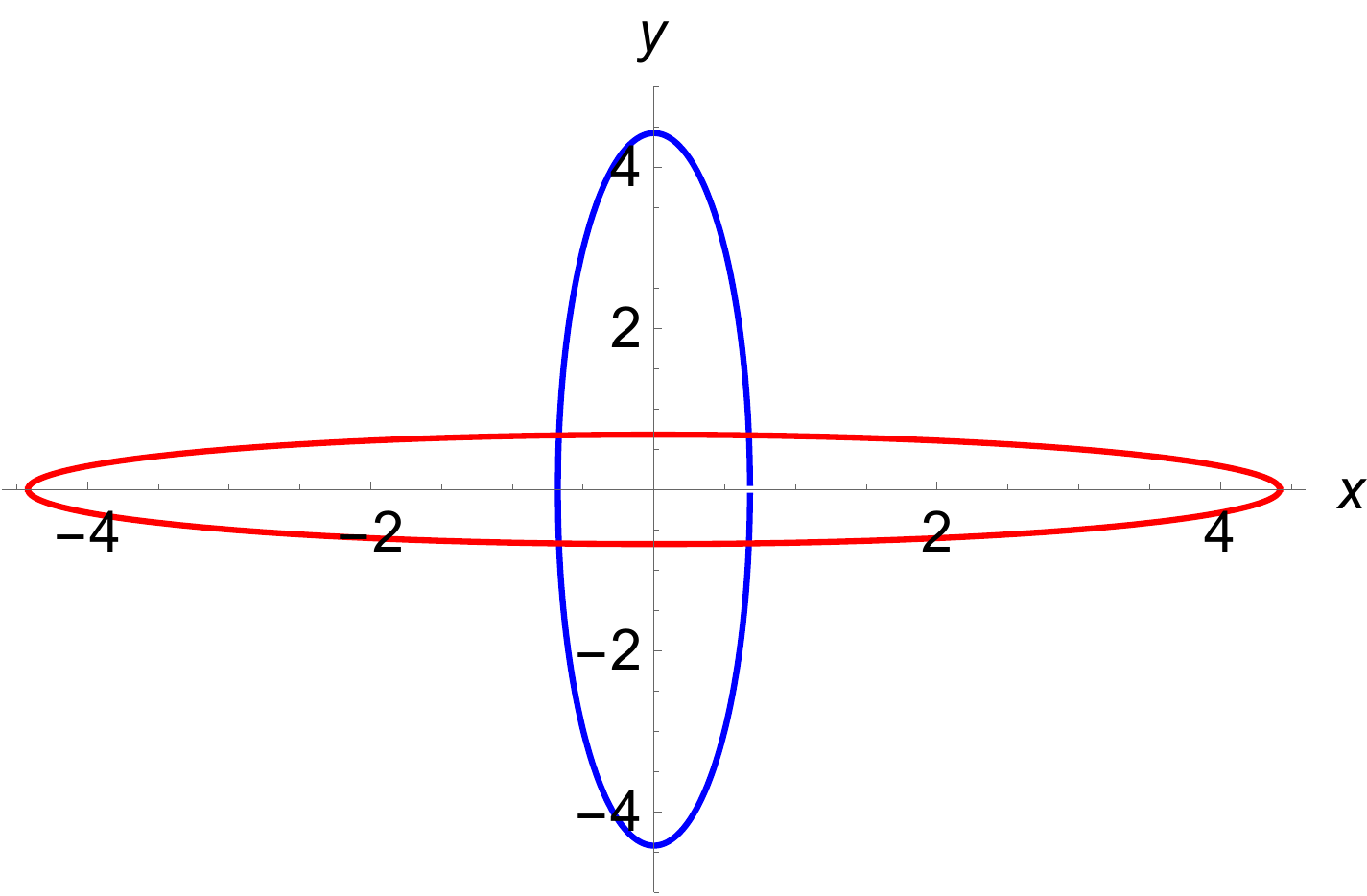}}
	\caption{Trajectories of the Hamiltonian $H_I$ (\ref{HI}), in the configuration space $(x,y)$, obtained from the algebraic equation (\ref{Tr1}). The plot (a) corresponds to the values $H_I=10,\,L=2,\,A=3$, (b) refers to the case $L= 0$ where $H_I=10,\,L=0,\,A=3$, whilst (c) displays the case $H_I=10,\,L=3,\,A=0$.}
	\label{H1F}
\end{figure*}

Figure~\ref{H1F} illustrates the trajectories of the Hamiltonian \( H_I \), as defined in equation~(28), in the configuration space \((x, y)\). These trajectories are obtained from the algebraic constraint equation~(34), without solving the equations of motion. In subfigure (a), corresponding to the values \( H_I = 10,\, L = 2,\, A = 3 \), all three integrals are nonzero, and the resulting trajectories form rotated ellipses, demonstrating coupling between degrees of freedom. Subfigure (b) shows the case \( H_I = 10,\, L = 0,\, A = 3 \), where the angular momentum \( L \) vanishes; the trajectories align along directions determined solely by the integral \( A \). In subfigure (c), with \( H_I = 10,\, L = 3,\, A = 0 \), the integral \( A \) vanishes, and the resulting trajectories form axis-aligned ellipses, corresponding to motion constrained along the principal directions of the system. These examples show how the form of the trajectories depends parametrically on the values of the integrals of motion, and how maximal superintegrability enables their determination through purely algebraic methods.

\subsection{The Kepler problem}

In this case, the Hamiltonian function reads
\begin{equation}
\label{HII}
H_{II} \ = \  \frac{1}{2}(\,p_{x}^{2} + p_{y}^{2}\,) \ - \  \frac{\alpha}{\sqrt{x^{2} + y^{2}}}\ , \qquad \alpha > 0\ .
\end{equation}
The integrals of motion are the angular momentum $L$ and the second component $A$ of the Laplace-Runge-Lenz vector \cite{miller2013classical}
\begin{equation}
L\ = \ x\,p_{y} - y\,p_{x}\ , \quad\text{and}\quad  A \ = \ - (x\,p_{y} - y\,p_{x})\,p_{x} - \frac{\alpha\, y}{\sqrt{x^{2} + y^{2}}}\ .
\end{equation}
We have
\begin{equation}
\begin{split}
\{L,A\} &= -x\,p_{y}^{2} \ + \  y\,p_{x}\,p_{y} \ + \  \frac{\alpha\, x}{\sqrt{x^{2} + y^{2}}}\\
&= \ B\ ,\\
\{L,B\} & \ = \ x\,p_{x}\,p_{y}\ - \ y\,p_{x}^{2} \ + \  \frac{\alpha \,y}{\sqrt{x^{2} + y^{2}}} \ ,\\
&= - A\ .
\end{split}
\end{equation}
therefore $k^{2} = 1$, cf. (\ref{eqbonatsos}). Now we write the function $G$ (\ref{BAred})
\begin{equation}
\begin{split}
G(H_{II},L) &=  \alpha^{2} \ + \  (x\,p_{y} - y\,p_{x})^{2}\,\left(p_{x}^{2} + p_{y}^{2} - \frac{2\alpha}{\sqrt{x^{2} + y^{2}}}\right)\\
&= \alpha^{2}\ + \ 2\,L^{2}\,H_{II} \ .
\end{split}
\end{equation}
Finally
\begin{equation}
\label{AB2}
\{A,B\}  \ = \  -2\,L\,H_{II} \ .
\end{equation}
Thus, unlike the previous case, the polynomial algebra generated by the four elements $(H_{II},L,A,B)$ is quadratic. This result is known in the literature; however, our rederivation follows a different line of reasoning and avoids direct Poisson bracket computations, relying instead on the structural properties of the integrals of motion.

\subsubsection{Trajectories in the configuration space}

Once again, by fixing the values of the three integrals of motion \( (H_{II}, L, A) \), we obtain a system of algebraic equations that allows for the elimination of the momentum variables \( p_x \) and \( p_y \). As a result, the trajectories of the system can be derived purely through algebraic manipulation, yielding an explicit expression of the form $y = y(x\,;\, H_{II}, L, A)$,
where the constants of motion enter as parameters.

Putting $m=1$ and $\alpha=1$, the defining equation of such trajectories is given by

\begin{equation}
\label{Tr2}
A^2 \left(x^2+y^2\right)\ + \ 2\, A\, y \sqrt{x^2+y^2}+L^4+y^2 \ - \ 2\, L^2 \,\left(A \, y+H_{II}\, x^2+\sqrt{x^2+y^2}\right)\ = \ 0   \ , 
\end{equation}
where $x^2+y^2 \neq0$\,. This algebraic equation can be reduced to a fourth-order polynomial equation in the variables $(x,y)$.
\begin{figure*}[h]
	\centering
	\subfigure[Case $L\, A\neq 0$]{\includegraphics[width=0.45\textwidth]{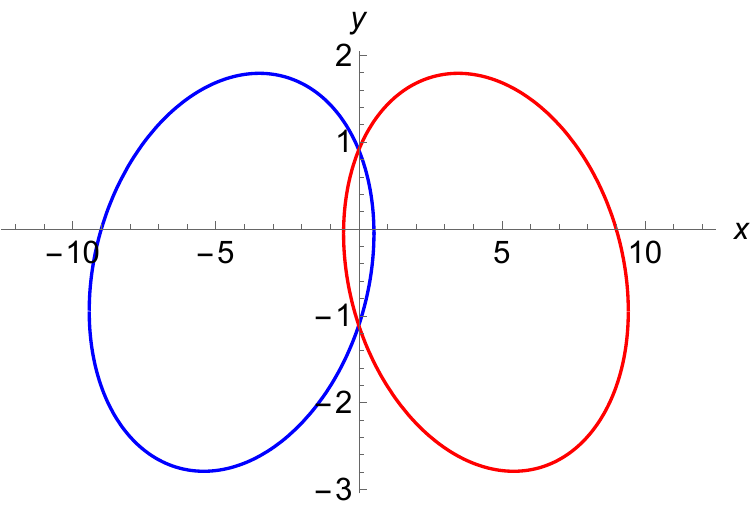}}\hfill
	\subfigure[$L= 0$, $A\neq 0$]{\includegraphics[width=0.45
\textwidth]{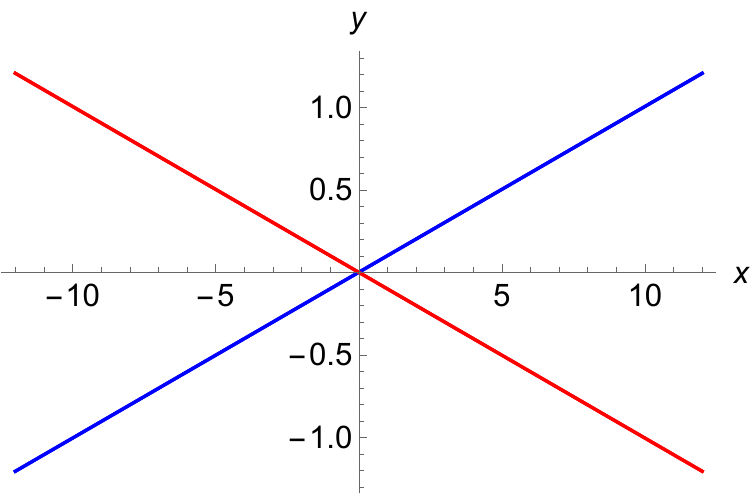}}
	\caption{Trajectories of the Hamiltonian $H_{II}$ (\ref{HII}), in the configuration space $(x,y)$, obtained from the algebraic equation (\ref{Tr2}). The plot (a) corresponds to the values $H_{II}=-\frac{1}{10},\,L=1,\,A=\frac{1}{10}$, (b) refers to the case $H_{II}=-\frac{1}{10},\,L=0,\,A=\frac{1}{10}$.}
	\label{H2F}
\end{figure*}

Figure~\ref{H2F} shows the trajectories of the Hamiltonian \( H_{II} \) in configuration space \( (x, y) \), obtained by eliminating the momenta using fixed values of the integrals \( (H_{II}, L, A) \). In both cases, the resulting algebraic equation defines a curve \( y = y(x) \), but due to its multivalued (typically quadratic) nature, \emph{two distinct trajectories} arise for each choice of constants. Subfigure (a), with \( L \neq 0 \) and \( A \neq 0 \), yields two closed curves, while subfigure (b), with \( L = 0 \), gives a pair of intersecting straight lines. This reflects the fact that fixing the integrals defines a level set in phase space, which can correspond to multiple geometrically distinct trajectories in configuration space.

\subsection{The Fokas-Lagerstrom system}

In this case, the Hamiltonian function is
\begin{equation}
\label{HIII}
H_{III} = \frac{1}{2}(p_{x}^{2} + p_{y}^{2}) + \frac{x^{2}}{2} + \frac{y^{2}}{18}.
\end{equation}
The constants of motion are given by \cite{bonatsos1994deformed}
\begin{equation}\label{eq2}
L = p_{x}^{2} + x^{2} \quad\text{and}\quad A = (xp_{y} - yp_{x})\,p_{y}^{2} + \frac{y^{3}\,p_{x}}{27} - \frac{x\,y^{2}p_{y}}{3}.
\end{equation}
We have
\begin{equation}
\begin{split}
\{L,A\} &= -2xyp_{y}^{2} + \frac{2xy^{3}}{27} - 2p_{x}p_{y}^{3} + \frac{2y^{2}p_{x}p_{y}}{3} \\
&=  B,\\
&\text{and}\\
\{L,B\} &=  -4\left((xp_{y} - yp_{x})p_{y}^{2} + \frac{y^{3}p_{x}}{27} - \frac{xy^{2}p_{y}}{3}\right) \\
&= -4\,A.
\end{split}
\end{equation}
Hence $k^{2} = 4$. On the other hand
\begin{equation*}
\begin{split}
B^{2} 
&= 4x^{2}y^{2}p_{y}^{4} - \frac{8x^{2}y^{4}p_{y}^{2}}{27} + \frac{4x^{2}y^{6}}{729} + 8xyp_{x}p_{y}^{5} - \frac{8xy^{3}p_{x}p_{y}^{3}}{3} - \frac{8xy^{3}p_{x}p_{y}^{3}}{27} + \\
&\quad + \frac{8xy^{5}p_{x}p_{y}}{81} + 4p_{x}^{2}p_{y}^{6} - \frac{8y^{2}p_{x}^{2}p_{y}^{4}}{3} + \frac{4y^{4}p_{x}^{2}p_{y}^{2}}{9},\\
&\text{and}\\
4\,A^{2} &= 4(xp_{y} - yp_{x})^{2}p_{y}^{4} + 8(xp_{y} - yp_{x})p_{y}^{2}\left(\frac{y^{3}p_{x}}{27} - \frac{xy^{2}p_{y}}{3}\right) + 4\left(\frac{y^{3}p_{x}}{27}- \frac{xy^{2}p_{y}}{3}\right)^{2} \\
&= 4x^{2}p_{y}^{6} - 8xyp_{x}p_{y}^{5} + 4y^{2}p_{x}^{2}p_{y}^{4} + \frac{8xy^{3}p_{x}p_{y}^{3}}{27} - \frac{8x^{2}y^{2}p_{y}^{4}}{3} - \frac{8y^{4}p_{x}^{2}p_{y}^{2}}{27} + \frac{8xy^{3}p_{x}p_{y}^{3}}{3} + \\
&\quad + \frac{4y^{6}p_{x}^{2}}{729} - \frac{8xy^{5}p_{x}p_{y}}{81} + \frac{4x^{2}y^{4}p_{y}^{2}}{9};
\end{split}
\end{equation*}
then
\begin{equation}
\begin{split}
G(H=H_{III},L) &= B^{2} + 4\,A^{2}\\
&= \frac{4x^{2}y^{2}p_{y}^{4}}{3} + \frac{4x^{2}y^{4}p_{y}^{2}}{27} + \frac{4x^{2}y^{6}}{729} + 4x^{2}p_{y}^{6} + 4p_{x}^{2}p_{y}^{6} + \frac{4y^{6}p_{x}^{2}}{729}  + \frac{4y^{2}p_{x}^{2}p_{y}^{4}}{3} +  \frac{4y^{4}p_{x}^{2}p_{y}^{2}}{27} \\
&= \frac{4}{3}y^{2}p_{y}^{4}(x^{2} + p_{x}^{2}) + \frac{4}{27}y^{4}p_{y}^{2}(x^{2} + p_{x}^{2}) + \frac{4}{729}y^{6}(x^{2} + p_{x}^{2}) + 4p_{y}^{6}(x^{2} + p_{x}^{2}) \\
&= (x^{2} + p_{x}^{2})\left(\frac{4y^{2}p_{y}^{4}}{3} + \frac{4y^{4}p_{y}^{2}}{27} + 4p_{y}^{6} + \frac{4y^{6}}{729}\right) \\
&= \frac{4}{729}(p_{x}^{2} + x^{2})(9p_{y}^{2} + y^{2})^{3} \\
&= -4L^{4} + 32H^{3}L - 48H^{2}L^{2} + 24HL^{3}.
\end{split}
\end{equation}
Eventually, we arrive to the Poisson bracket
\begin{equation}
\begin{split}
\{A,B\}= 8L^{3} - 16H^{3} + 48H^{2}L - 36HL^{2}\ .
\end{split}
\end{equation}

Thus, the polynomial algebra generated by the four elements $(H_{III},L,A,B)$ is cubic. 

\subsubsection{Trajectories in the configuration space}

From the conserved quantities \( (H_{III}, L, A) \), we can determine the trajectories of the system in configuration space. The equation defining these trajectories is given by

\begin{equation}
\label{Tr3}
\begin{aligned}
&531441 A^4+\big(-729 x^2 (2 H-L)^3+216 L y^4 (L-2 H)+729 L y^2 (L-2 H)^2+16 L y^6\big)^2
\\ & -1458 A^2 \left(-216 y^4 (2 H-L) \left(L-2 x^2\right)+729 y^2 (L-2 H)^2 \left(L-2 x^2\right)+729 x^2 (2 H-L)^3+16 y^6 \left(L-2 x^2\right)\right)  = 0 \ ,
\end{aligned}
\end{equation}
here $H=H_{III}$.

\begin{figure*}[h]
	\centering
 \subfigure[$L\, A\neq 0$]
{\includegraphics[width=0.4\textwidth]{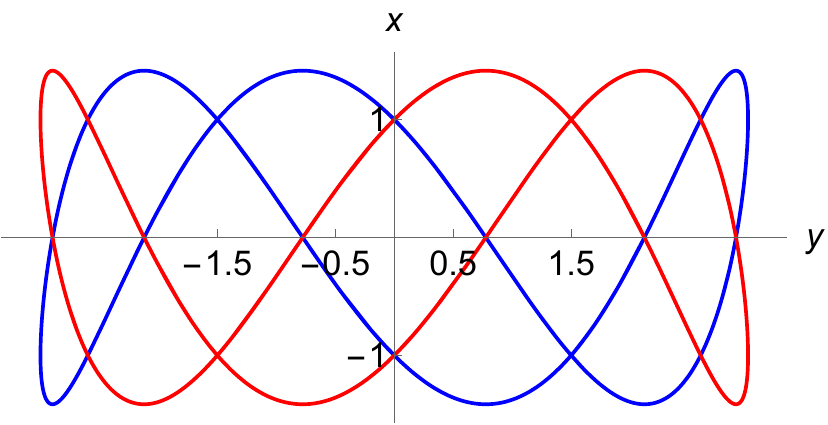}}\hfill
\subfigure[$A=0$, $L\neq 0$]
{\includegraphics[width=0.5\textwidth]{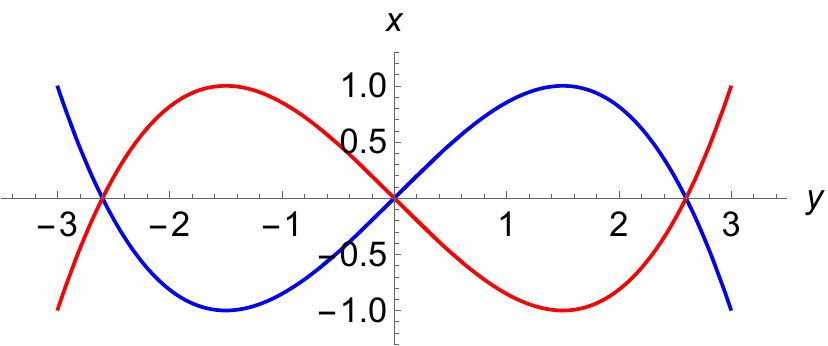}}
	\caption{Trajectories (blue and red lines) of the Hamiltonian $H_{III}$ (\ref{HIII}) obtained from the algebraic equation (\ref{Tr3}). The case (a) correspond to the values $H_{III}=\frac{3}{2},\,L=2,\,A=1$, whereas in (b) $H_{III}=1,\,L=1,\,A=0$. }
	\label{H3F}
\end{figure*}

Figure~\ref{H3F} presents the trajectories of the Hamiltonian \( H_{III} \), obtained from the algebraic constraint equation~(\ref{Tr3}), which relates \( x \) and \( y \) for fixed values of the conserved quantities \( (H_{III}, L, A) \). In this case, it is more convenient to represent the trajectories in the form \( x = x(y) \), rather than \( y = y(x) \), due to the structure of the defining equation. Subfigure (a), with nonzero values of both integrals \( L \) and \( A \), shows a pair of oscillatory trajectories (in blue and red) that exhibit a complex periodic structure. These curves are symmetric and highly structured, reflecting the interplay of both conserved quantities. In contrast, subfigure (b) corresponds to the degenerate case \( A = 0 \), where the algebraic curve simplifies, and the resulting trajectories take the form of lower-frequency periodic waves aligned along the \( y \)-axis. As in previous examples, fixing the integrals defines a multivalued algebraic constraint that gives rise to multiple geometrically distinct trajectories in configuration space.

\subsection{The Holt system}

In the following, we examine a more complex Hamiltonian that includes a singular term of the form \( 1/x^2 \).\cite{holt1982construction}
\begin{equation}
\label{HIV}
H_{IV} = \frac{1}{2}(p_{x}^{2} + p_{y}^{2}) + (x^{2} + 4y^{2}) + \frac{\delta}{x^{2}}\ ,
\end{equation}
$\delta \in \mathbb{R}$ is a parameter. The corresponding constants of motion quadratic in the momentum variables are  \cite{bonatsos1994deformed}
\begin{equation}
L = p_{y}^{2} + 8y^{2} \quad\text{and}\quad
A = p_{x}^{2}p_{y} + 8xyp_{x} - 2x^{2}p_{y} + \frac{2\delta p_{y}}{x^{2}}.
\end{equation}
We have
\begin{equation}
\begin{split}
\{L,A\} 
&= 16yp_{x}^{2} - 32x^{2}y + \frac{32y\delta}{x^{2}} - 16xp_{x}p_{y} \\
&= B,\\
&\text{and}\\
\{L,B\} 
&= -32\left(8xyp_{x} + p_{x}^{2}p_{y} - 2x^{2}p_{y} + \frac{2\delta p_{y}}{x^{2}}\right) \\
&= -32A.
\end{split}
\end{equation}
so $k^{2} = 32$. The function $G$ (\ref{BAred}) reads
\begin{equation}
\begin{split}
G(H=H_{IV},T) 
&= 32(8y^{2} + p_{y}^{2})\left(p_{x}^{4} + 4\,x^{2}\,p_{x}^{2} + 4\,x^{4} + \frac{4\,\delta \,p_{x}^{2}}{x^{2}} - 8\delta + \frac{4\,\delta^{2}}{x^{4}} \right)\\
&=32\,L\,(2H -L)^{2} \ - \ 512\,\delta\, L.
\end{split}
\end{equation} 
Finally
\begin{equation}
\begin{split}
\{A,B\} &=-16(2H - L)^{2} + 32L(2H -L) + 256\,\delta
\end{split}
\end{equation}

\noindent The polynomial algebra generated by the four elements $(H_{IV},L,A,B)$ is quadratic. 

\subsubsection{Trajectories in the configuration space}

From the system of integrals of motion, with $\delta=1$, the resulting algebraic equation governing the trajectories in configuration space is

\begin{equation}
\begin{aligned}
\label{Tr4}
 & A^4+\left(L \left(-2 \,H+L+4 x^2\right)^2-8 y^2 (-2 H+L-4) (-2 H+L+4)\right)^2
 \\ &
 \ - \ 2 A^2 \left(L \left(-2 H+L+4 x^2\right)^2-8 y^2 \left(16 x^2 (L-2 H)+(L-2 H)^2+32 x^4+16\right)\right) = 0  \ , 
\end{aligned}    
\end{equation}
$x\neq 0$, here $H=H_{IV}$, see Fig. \ref{H4F}.

\begin{figure*}[h]
	\centering
 \subfigure[$L\, A\neq 0$]
{\includegraphics[width=0.5\textwidth]{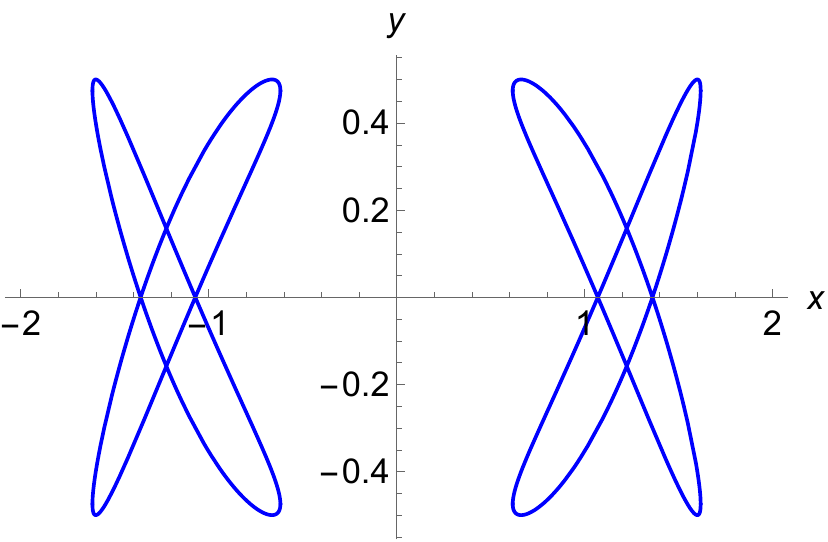}}\hfill
\subfigure[$A=0$, $L\neq 0$]
{\includegraphics[width=0.45\textwidth]{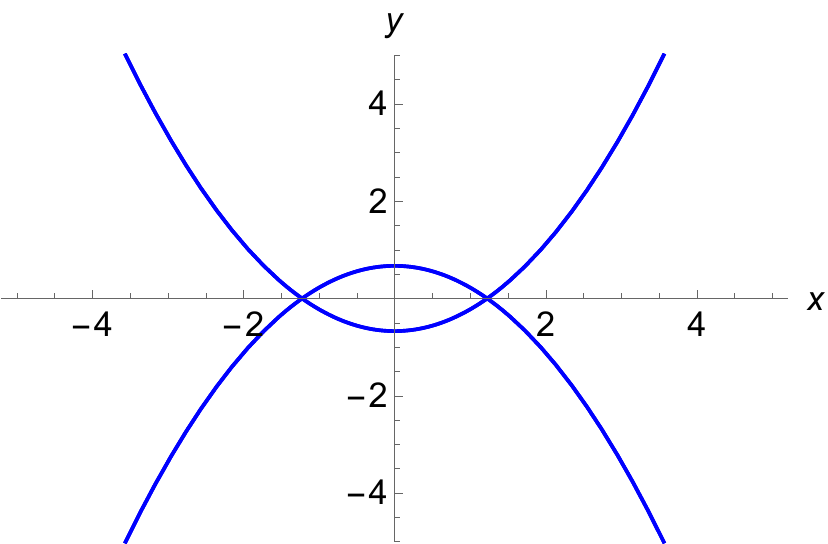}}
	\caption{Trajectories of the Hamiltonian $H_{IV}$ (\ref{HIV}) obtained from the algebraic equation (\ref{Tr4}). The case (a) correspond to the values $H_{IV}=4,\,L=2,\,A=2$, whereas in (b) $H_{IV}=4,\,L=2,\,A=0$. }
	\label{H4F}
\end{figure*}

\noindent As before, the topology of the trajectories is largely governed by the values of the conserved quantities.

\subsection{The Smorodinsky-Winternitz system}

The Smorodinsky-Winternitz Hamiltonian function, introduced in the 1960s, emerged as an extension of the isotropic harmonic oscillator with added inverse-square terms\cite{winternitz1967symmetry,bonatsos1994deformed}
\begin{equation}
\label{HV}
H_V = \frac{1}{2}(p_{x}^{2} + p_{y}^{2}) \ + \ b\,(x^{2} + y^{2}) \  + \ \frac{c}{x^{2}}\ ,
\end{equation}
$b,c \in \mathbb{R}$. The functionally independent constants of motion are the Hamiltonian $H_V$ and 
\begin{equation}
T \ = \ p_{y}^{2} \ + \ 2\,b\,y^{2} \quad\text{and}\quad
C \ = \ x^{2}\,p_{y}^{2} \ + \  y^{2}\,p_{x}^{2} \ - \ 2\,x\,y\,p_{x}\,p_{y} + \frac{2\,c\,y^{2}}{x^{2}}\ .
\end{equation}

We choose $L=T$ and $A=\lbrace T,C\rbrace=8bx^{2}yp_{y} - 8bxy^{2}p_{x} - 4yp_{x}^{2}p_{y} + 4xp_{x}p_{y}^{2} - \frac{8cyp_{y}}{x^{2}}$.
Therefore, we have
\begin{equation}
\lbrace L,A \rbrace  = -16b\left(x^{2}p_{y}^{2} + y^{2}p_{x}^{2} + \frac{2cy^{2}}{x^{2}} - 2xyp_{x}p_{y}\right) + 32\left(bxy + \frac{p_{x}p_{y}}{2}\right)^{2} + \frac{16cp_{y}^{2}}{x^{2}}=B\ ,
\end{equation}
and
\begin{equation}
\begin{split}
\lbrace L,B \rbrace &= 128byp_{x}^{2}p_{y} + \frac{256bcyp_{y}}{x^{2}} - 256b^{2}x^{2}yp_{y} - 128bxp_{x}p_{y}^{2} + 256b^{2}xy^{2}p_{x}, \\
&= -32\,b\,A.
\end{split} 
\end{equation}
Here $k^{2} = 32\,b$. The function $G$ (\ref{BAred}) takes the form
\begin{equation}
\begin{split}
G(H=H_V,L) &= \frac{64(4c^{2} + 4cx^{2}(p_{x}^{2} - 2bx^{2}) + x^{4}(p_{x}^{2} + 2bx^{2})^{2})(p_{y}^{2} + 2by^{2})^{2}}{x^{4}} , \\ 
&= 256\left(\left(H - \frac{L}{2}\right)^{2} - 4bc\right)L^{2}.
\end{split}
\end{equation}
Finally
\begin{equation}
\begin{split}
\lbrace A,B \rbrace &=  -\frac{1}{2}\frac{\partial G}{\partial L}, \\
&= -256 \left(\left(H - \frac{L}{2}\right)^{2} - 4bc\right)L + 128\left( H - \frac{L}{2} \right)L^{2} \ .
\end{split}
\end{equation}

Hence, the polynomial algebra generated by the four elements $(H_{V},L,A,B)$ is cubic. It is known that this system can be also characterized by a quadratic polynomial algebra of integrals \cite{miller2013classical}. Hence, it may happen that the present approach does not provide the \textit{minimal} polynomial algebra of integrals. Here, we mainly focus on the existence of the polynomial algebra itself. Moreover, as previously mentioned, our method is natural towards the study of the corresponding quantum systems by means of the deformed oscillator algebras.  

\subsubsection{Trajectories in the configuration space}

Taking $b=c=1$, the algebraic equation of the trajectories in the configuration space, obtained from the integrals of motion, reads

\begin{equation}
\begin{aligned}
\label{Tr5}
 & A^4+32 A^2 \bigg(L^2 \left(x^2 \left(-2 \,H+L+2 x^2\right)+2\right)+2 y^4 \left(16 x^2 (L-2 H)+(L-2 H)^2
 +32 x^4+16\right)
\\ & -L y^2 \left(16 x^2 (L-2 H)+(L-2 H)^2+32 x^4+16\right) \bigg)
 +256 \bigg(L^2 \left(x^2 \left(-2 H+L+2 x^2\right)+2\right)
\\ & 
 -2 y^4 (-2 H+L-4) (-2 H+L+4)+L y^2 (-2 H+L-4) (-2 H+L+4)\bigg)^2 = 0  \ , 
\end{aligned}    
\end{equation}

$x\neq 0$, here $H=H_{V}$, see Fig. \ref{H5F}.

\begin{figure*}[h]
	\centering
 \subfigure[$L\, A\neq 0$]
{\includegraphics[width=0.5\textwidth]{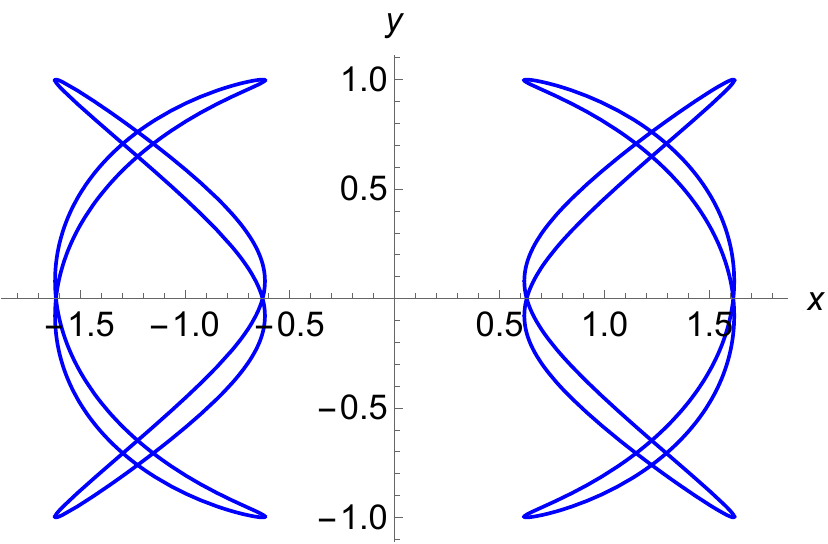}}\hfill
\subfigure[$A=0$, $L\neq 0$]
{\includegraphics[width=0.45\textwidth]{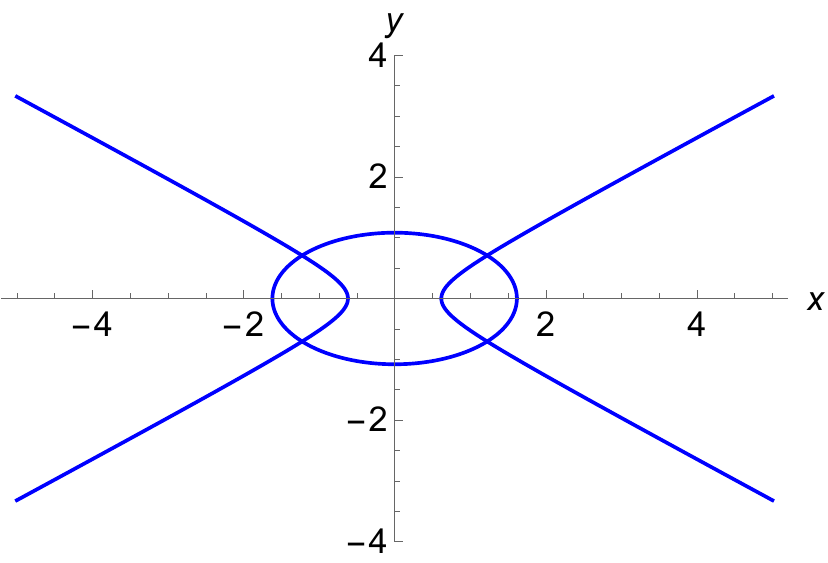}}
	\caption{Trajectories of the Hamiltonian $H_{V}$ (\ref{HV}) obtained from the algebraic equation (\ref{Tr5}). The case (a) correspond to the values $H_{V}=4,\,L=2,\,A=2$, whereas in (b) $H_{V}=4,\,L=2,\,A=0$. }
	\label{H5F}
\end{figure*}

\subsection{A non-separable Hamiltonian}

In \cite{Post_2011}, it was introduced a 2D superintegrable Hamiltonian system that does not allow additive separation of variables in the Hamilton-Jacobi equation in any system of coordinates. The corresponding Hamiltonian is of the form:

\begin{equation}
  H_{VI}\ = \ \frac12(p_x^2+p_y^2)\ + \ \frac{\alpha \,y}{x^{\frac23}}  \ .
\end{equation}

\noindent It admits two algebraically independent integrals, a cubic integral in the momentum variables

\begin{equation}
    L \ = \ 3\,p_x^2\,p_y \ + \ 2\,p_y^3 \ + \ 9\,\alpha\, x^{\frac13}\, p_x \ + \ \frac{6\,\alpha\, y}{x^{\frac23}} \,p_y \ ,
\end{equation}

\noindent and the quartic integral

\begin{equation}
    A \ = \ p_x^4 \ + \ \frac{4\,\alpha\, y}{x^{\frac23}}\,p_x^2 \ - \ 12\,x^{\frac13}\,\alpha \,p_x\,p_y\ - \ \frac{2\,\alpha^2\,(9x^2-2y^2)} {x^{\frac43}} \ .
\end{equation}

As indicated in \cite{Post_2011}, we have that $B \,=\,\{L,A\} \ = \ 108 $, hence $k=0$ in (\ref{eqbonatsos}). This implies that the polynomial algebra generated by the four elements $(H=H_{VI},L,A,B)$ is trivially a linear one.

\subsection{A Hamiltonian trigonometric on the momentum coordinates}

To illustrate the general applicability of the present construction, let us consider the following Hamiltonian function

\begin{equation}
H_{VII}\ = \ \cos(y\,p_{y}^2).
\end{equation}
The two functions
\begin{equation}
L \   = \ p_{x}\quad\text{and}\quad A\ = \ (\sin x)\left(p_{x}^{2}\cos^{3}(y\,p_{y}^2)\right)\ ,
\end{equation}
are constants of motion such that
\begin{equation}
\lbrace L,\lbrace L,A\rbrace\rbrace\ = \ -A\,;
\end{equation}
Indeed, 
\begin{equation}
\lbrace L,A\rbrace\ = \ -\cos x\left( p_{x}^{2}\cos^{3}(y\,p_{y}^2) \right) \ ,
\end{equation}
and
\begin{equation}
\lbrace L,\lbrace L,A\rbrace\rbrace=-\sin x\left(p_{x}^{2}\cos^{3}(y\,p_{y}^2)\right)\ = \ -A\ .
\end{equation}
On the other hand
\begin{equation}
B^{2}+A^{2}\ = \ p_{x}^{4}\cos^{6}(yp_{y}^{2})\ = \ L^{4}\,H^{6}\quad , \qquad H=H_{VII} \ ,
\end{equation}
as well as
\begin{equation}
\lbrace A,B\rbrace \ = \ -2\,L^{3}\,H^{6}\ .
\end{equation}
Thus, even in the case when the superintegrable Hamiltonian is not of the standard form (i.e., a quadratic function in the momentum variables) it is possible to construct the associated polynomial algebra of integrals.

\subsection{The harmonic oscillator in a curved space with constant curvature}

The harmonic oscillator in a space with constant curvature has been studied in the classical case in \cite{higgs1979dynamical} and in the quantum case in \cite{bonatsos1994deformed}. The Hamiltonian function is 
\begin{equation}
H_{VIII} \ = \ \frac{1}{2}(\pi_{x}^{2} + \pi_{y}^{2} + \lambda L^{2}) + \frac{\omega^{2}}{2}(x^{2} + y^{2})\ ,
\end{equation}
where $L$ is the angular momentum
\begin{equation}
L \  = \  x\,p_{y} - y\,p_{x}\ ,
\end{equation}
and
\begin{equation}
\begin{split}
\pi_{x} &= p_{x} + \frac{\lambda}{2}(x(xp_{x} + yp_{y}) + (xp_{x} + yp_{y})x)\ , \\
\pi_{y} &= p_{y} + \frac{\lambda}{2}(y(xp_{x} + yp_{y}) + (xp_{x} + yp_{y})y)\ . 
\end{split}
\end{equation}
We have that $L$ and the so called Fradkin-like tensor
\begin{equation}\label{eq4}
A \ = \ S_{xx}-S_{yy}=(\pi_{x}^{2} + \omega^{2}x^{2}) - (\pi_{y}^{2} + \omega^{2}y^{2}) \ ,
\end{equation}
are constants of motion.

\begin{equation}
\begin{split}
\lbrace L, A \rbrace &= 4 (p_{x}^{2}\lambda x(1 + \lambda x^{2})y + p_{x}p_{y}(1 + 2\lambda^{2} x^{2}y^{2} + \lambda(x^{2} + y^{2})) + \\
&\quad + xy(\omega^{2} + p_{y}^{2} \lambda(1 + \lambda y^{2}))), \\
 &= B
\end{split}
\end{equation}
and
\begin{equation}
\begin{split}
\lbrace L, B \rbrace &= -4\bigg(\omega^{2}(x^{2} - y^{2}) + 2p_{x}p_{y}\lambda^{2} xy(x^{2} - y^{2}) +\bigg. \\
&\quad + \bigg. p_{x}^{2}(1 + 2\lambda x^{2} + \lambda^{2}(x^{4} - x^{2}y^{2}))\bigg. - \\ &\quad - \bigg. p_{y}^{2}(1 + 2\lambda y^{2} + \lambda^{2}(-x^{2}y^{2} + y^{4}))\bigg)\\
&= -4A;
\end{split}
\end{equation}
so $k=2$ and
\begin{equation}
G(H,L)=B^{2} + 4A^{2}=4\lambda^{2}L^{4} - 16\lambda HL^{2} - 4\,\omega^{2}L^{2} + 4\,H^{2},
\end{equation}
therefore
\begin{equation}
\begin{split}
\lbrace A,B \rbrace &= - \frac{1}{2}\frac{\partial G}{\partial L}, \\
&= -8\,\lambda^{2}\,L^{3} + 16\,\lambda\, H\,L + 4\,\omega^{2}L\ ,
\end{split}
\end{equation}
here $H=H_{VIII}$. At $\lambda=0$ we recover (\ref{ABI}), as it should be.

\vspace{0.2cm}

\noindent Table~\ref{tab:algebra_types} presents a summary of the polynomial algebra types associated with the integrals of motion for the two-dimensional superintegrable systems studied.

\begin{table}[h!]
\centering
\renewcommand{\arraystretch}{1.2}
\small
\begin{tabular}{|l|c|c|p{7.8cm}|}
\hline
\textbf{System} & \textbf{\(H\)} & \textbf{Type} & \textbf{Remarks} \\
\hline
Isotropic Harmonic Oscillator & \( H_I \) & Linear & Standard example; Poisson algebra closes linearly as \( \{A, B\} \propto L \); \( G = (2\,m\,H)^{2}\, - \, \omega^{2}\,m^{2}\,(2\,L)^{2}  \). \\
\hline
Kepler Problem & \( H_{II} \) & Quadratic & Hidden symmetry via Laplace–Runge–Lenz vector; \( \{A, B\} \propto LH \); \( G = \alpha^2 +2\,H\,L^2 \). \\
\hline
Fokas–Lagerstrom System & \( H_{III} \) & Cubic & Nontrivial cubic closure; multibranch trajectories; \( G = -4L^4 + 32H^3L - 48H^2L^2 + 24HL^3 \). \\
\hline
Holt System & \( H_{IV} \) & Quadratic & Inverse-square singularity; \( \{A, B\} \) quadratic in \( H, L \); \( G = 32\,L\,(2H -L)^{2} \ - \ 512\,\delta\, L \). \\
\hline
Smorodinsky–Winternitz & \( H_V \) & Cubic & Cubic algebra reducible to quadratic form;  \( G = 256\left(\left(H - \frac{L}{2}\right)^{2} - 4bc\right)L^{2}\). \\
\hline
Post–Winternitz (non-sep.) & \( H_{VI} \) & Linear & Nonseparable; algebra closes with constant bracket \( \{L, A\} \); \( G = \text{const.\,} \). \\
\hline
Trigonometric Momentum & \( H_{VII} \) & Quadratic & Deformed oscillator-type algebra; \( \{L, \{L, A\}\} = -A \); \( G =  L^{4}\,H^{6} \). \\
\hline
Oscillator on Curved Space & \( H_{VIII} \) & Quadratic & Curvature-dependent algebra; structure function involves \( \lambda \); \( G =4\lambda^{2}L^{4} - 16\lambda HL^{2} - 4\omega^{2}L^{2} + 4H^{2} \). \\
\hline
\end{tabular}
\caption{Type of the polynomial algebra of integrals in various 2D superintegrable systems, including the associated structure function \( G(H, L) \).}
\label{tab:algebra_types}
\end{table}

\section{Conclusions}

In this work, a rigorous and constructive proof has been presented establishing the existence of a four-dimensional polynomial Poisson algebra for classical two-dimensional superintegrable Hamiltonian systems. This construction is fully general: it does not assume that the integrals of motion are polynomial in the canonical momenta, nor does it rely on separability of the Hamilton–Jacobi equation. The main result shows that, for any classical superintegrable system in two dimensions, it is always possible to construct two functionally independent integrals of motion whose Poisson brackets close into a structure analogous to a deformed oscillator algebra.\\

Several physically and mathematically relevant examples were presented to illustrate the generality of the result, including well-known systems such as the harmonic oscillator, the Kepler problem, the Holt, Smorodinsky–Winternitz, and Fokas–Lagerstrom potentials, as well as non-separable models. In each case, the explicit polynomial algebra among the integrals of motion was derived, and in many instances, the trajectories in configuration space were obtained through purely algebraic manipulations, without solving differential equations.\\

A particularly noteworthy feature of the method is its applicability to systems with integrals that are not polynomial in the momenta—an aspect largely unexplored in existing literature. The framework also accommodates systems defined on curved spaces, further demonstrating the versatility of the approach.\\

An important structural feature emerging from our analysis is the local \emph{abelianization} of the integrals of motion on specific invariant submanifolds, where originally noncommuting conserved quantities become involutive. This phenomenon highlights distinguished dynamical sectors in which the system exhibits a local Liouville-integrable behavior, despite its global superintegrable character. Furthermore, the study of the algebraic equations obtained from the integrals of motion, used to derive trajectories in configuration space without solving the equations of motion, reveals that the \emph{discriminant} of these equations carries meaningful physical and geometric information. In particular, it would identify critical values of the integrals that correspond to bifurcations in trajectory topology or the onset of degeneracies. It will be done elsewhere. \\

These findings provide a deeper understanding of the phase space structure of superintegrable systems and suggest new avenues for algebraic orbit classification, symbolic integration, and applications to the spectral theory of quantum analogues.

\section*{Acknowledgments}

The author R. Azuaje thanks Costas Daskaloyannis for some helpful comments during the initial stages of this study. A. M. Escobar Ruiz gratefully acknowledges A. Turbiner for valuable discussions on polynomial algebras in both quantum and classical settings. R. Azuaje thanks the financial support provided by the Secretaría de Ciencia, Humanidades, Tecnología e Innovación (SECIHTI) of Mexico through a postdoctoral fellowship under the Estancias Posdoctorales por México 2022 program. A.M. Escobar Ruiz acknowledges support from the Consejo Nacional de Humanidades, Ciencias y Tecnologías (CONAHCyT) of Mexico under Grant No. CF-2023-I-1496, as well as institutional funding from Universidad Autónoma Metropolitana (UAM) through research grant 2024-CPIR-0.

\bibliography{refs} 
\bibliographystyle{unsrt} 

\end{document}